\documentclass[aps,pra,onecolumn,nobibnotes,superscriptaddress,nofootinbib]{revtex4-2}
\usepackage{graphicx} 
\usepackage{caption,graphics,physics}
\usepackage{subcaption}
\usepackage{amsmath,xcolor,bm}
\usepackage{qcircuit}
\usepackage{mathtools,amsmath,amsthm,amsfonts,amssymb,amscd}
\usepackage{hyperref}
\usepackage[ruled,vlined]{algorithm2e}
\SetAlgorithmName{Algorithm}{Algorithm}{List of Protocols}
\usepackage{algpseudocode}

\usepackage[capitalize]{cleveref}
\crefname{algorithm}{Algorithm}{Protocols}
\hypersetup{
    colorlinks=true,
    linkcolor=red,
    filecolor=magenta,      
    urlcolor=cyan,
}
\usepackage{geometry}

\newtheorem{theorem}{Theorem}

\newtheorem{lemma}{Lemma}

\newtheorem{definition}{Definition}

\newtheorem{corollary}{Corollary}

\Crefname{appsec}{Appendix}{Appendices}

\begin{document}
\title{Exact and Efficient Circuit Construction for Block Encoding Matrix Polynomials}

\author{Taehee Ko}
\email{kthmomo@kias.re.kr}
\affiliation{School of Computational Sciences, Korea Institute for Advanced Study, Seoul, Republic of Korea}

\begin{abstract}
A recent interpolation-based Quantum Signal Processing (QSP) framework by Alase bypasses the phase-finding procedures required in conventional QSP, allowing for a direct encoding of the target polynomial into a quantum circuit. However, this approach assumes access to a diagonal block encoding of function values without providing an explicit circuit construction. In this work, we address this gap by developing an explicit circuit construction method for diagonal block encodings. The resulting algorithm achieves a computational cost of $\mathcal{O}(d\log d)$ for explicitly constructing block encodings of matrix polynomials, improving upon the best-known theoretical bounds of previous methods. Numerical results confirm this scaling, demonstrating that circuit parameters for polynomial degrees up to $10^7$ can be computed in about a minute on a standard CPU.
\end{abstract}

\maketitle

\section{Introduction}

Quantum Signal Processing (QSP) \cite{low2017optimal,gilyen2019quantum} provides an optimal framework for the block encoding of matrix polynomials of Hermitian matrices with minimal ancilla overhead. Despite its relatively short history, QSP has already become a central component of modern quantum algorithms across diverse areas of scientific computation, including quantum many-body problems, partial differential equations \cite{huang2026fullqubit,gilyen2019quantum,li2026quantum,lin2026quantum,martyn2021grand,rouze2026optimal,li2023efficient,chen2025efficient}, and machine learning \cite{khan2026quantum,guo2024nonlinear,ivashkov2026qkan}. As many application results rely on theoretical circuit-existence guarantees provided by QSP representations \cite{low2017optimal,gilyen2019quantum}, explicitly constructing corresponding circuits is paramount for realizing the practical potential of quantum algorithms. 

One difficulty of constructing explicit circuits for QSP lies in finding phase factors. To tackle this, several algorithms have been developed from diverse numerical perspectives, using root-finding procedures \cite{haah2019product,chao2020finding}, optimization subroutines \cite{ying2022stable,dong2024robust}, fixed-point iterations \cite{dong2024infinite,ni2026fast}, and the nonlinear Fourier transform \cite{alexis2026infinite,ni2026fast}. In particular, Ni and Ying \cite{ni2026fast} provided the state-of-the-art algorithms for computing phase factors to a desired precision for real polynomials of a definite parity. This parity constraint is eliminated in generalized quantum signal processing (GQSP) \cite{motlagh2024generalized}, making it applicable to any Laurent polynomial. While recent progress has been made for further improvements \cite{yamamoto2024robust,berntson2025complementary}, rigorous guarantees for finding phase factors for GQSP have not yet been established. As a result, a method to explicitly and accurately construct circuits for block-encoded matrix polynomials without parity constraints is currently lacking.

In this work, we address this gap. We employ the recent QSP framework by Alase \cite{alase2025quantum}, which bypasses the task of finding phase factors by assuming a diagonal block encoding of function values at interpolation points. To eliminate this assumption and explicitly construct the circuit, we develop a method for constructing the diagonal block encoding. Our construction involves the multiplication of only two uniformly controlled rotations (UCRs), whose rotation angles can be computed exactly in $\mathcal{O}(d\log d)$ time. Accordingly, the resulting algorithm achieves a compilation time of $\mathcal{O}(d\log d)$, matching the best time observed in empirical results \cite{motlagh2024generalized} and improving upon the best-known theoretical bound in \cite{ni2026fast} by a factor of $\mathcal{O}(\log d\log\frac{1}{\epsilon})$. We note that this approach requires $\mathcal{O}(\log d)$ ancilla qubits, a trade-off inherited from the underlying framework \cite{alase2025quantum}.

\section{Comparison to related work}

For Laurent polynomials, a conventional LCU approach \cite{2481569.2481570} allows for an exact block encoding without any algorithmic procedures. However, due to the need for SELECT and PREPARE oracles \cite{lin2026quantum}, this approach requires multi-qubit controlled input unitary operations. In particular, the amplitude encoding in the PREPARE oracle requires the coefficients of a target Laurent polynomial to be normalized by their absolute sum. These shortcomings are addressed in the proposed method, which uses single-qubit controlled input unitary operations and imposes the relaxed normalization condition as in \cref{tab:comparison}. Similarly, GQSP \cite{motlagh2024generalized} is implemented using single-qubit controlled operations under the normalization condition with the uniform norm. One advantage of GQSP is that it uses a single ancilla qubit, whereas the proposed approach requires $\mathcal{O}(\log d)$ ancilla qubits; however, the GQSP algorithm is not rigorously proven to output  phase factors to a target precision.

For real polynomials of definite parity, rigorous guarantees have been established for various algorithms (e.g., see \cite[Table 1]{ni2026fast}). Using a single ancilla qubit, the two algorithms in \cite[Theorem 1 and Theorem 2]{ni2026fast} guarantee circuits to a target precision. In situations where a block-encoding input model is preferred for polynomials of definite parity (e.g., Gibbs state preparation with a shift \cite{gilyen2019quantum}, the eigenvalue thresholding problem \cite{martyn2021grand}), these algorithms may be more beneficial than the aforementioned methods, including ours, because they use uncontrolled block encodings while the others require controlled ones. However, the opposite is true when a Laurent polynomial approximation without definite parity is required (e.g., extensions of Hamiltonian simulation \cite{motlagh2024generalized} and Liouvillian simulation \cite{huang2026fullqubit}). It is also worth noting that the normalization condition inherited from the framework employed \cite{alase2025quantum} is imposed only at the $4d$
interpolation points. This condition is weaker than the corresponding
normalization conditions used in LCU, GQSP, and
\cite[Theorem 1 and Theorem 2]{ni2026fast}. This relaxation comes at the cost of a subnormalization factor of $\sqrt{2}$ for the block encoding of a matrix polynomial, corresponding only to a constant-factor overhead. We summarize this comparison with further details in \cref{tab:comparison}.

Our method for diagonal block encoding is comparable to the state-of-the-art single ancilla block encoding (SIABLE) \cite{li2026reducing}, assuming SIABLE is applied specifically to a diagonal matrix, as summarized in \cref{tab:diag_comparison}. SIABLE improves upon previous methods \cite{guseynov2025gate,li2025binary,rattew2023non,camps2022fable} across various complexity measures. Notably, it requires only a single ancilla qubit, unlike prior methods that require a system-sized ancilla register, and achieves a subnormalization factor of $1$. 

Because SIABLE targets a general matrix based on the SVD, its circuit consists
of a UCRZ encoding the singular values together with the singular-vector
unitaries. For a diagonal matrix, the latter reduce to a diagonal phase
unitary, which can be synthesized using $R_z$ and CNOT gates. In contrast,
our construction directly cascades two consecutive UCRs with $R_y$ and $R_z$
rotations. Due to this, gates for both methods are different, as shown in \cref{tab:diag_comparison}.

\begin{center}
\begin{table}[h]
    \centering
    \begin{tabular}{|c|c|c|c|c|c|}
    \hline
    Algorithm & LCU \cite{2481569.2481570} & GQSP \cite{motlagh2024generalized} & \cite[Theorem 1]{ni2026fast} & \cite[Theorem 2]{ni2026fast} & This work 
    (\cref{thm:complexQSP}) \\
    \hline
    Construction time & $\mathcal{O}(d)$\footnote{We consider the compilation time of the
coefficient-dependent PREPARE oracle.}  & $\mathcal{O}(d\log d)$\footnote{This result is numerically observed but not theoretically proven.} & $\widetilde{\mathcal{O}}(d^2+\frac{d\log(\frac{d}{\eta\epsilon})}{\eta})$ & $\mathcal{O}(d\log^2d\log\frac{1}{\epsilon})$ & $\mathcal{O}(d\log d)$ \\
    \hline
    Algorithmic error & No & Yes & Yes & Yes & No \\
    \hline
    Polynomial & Laurent  & Laurent & Real even (or odd) & Real even (or odd) & Laurent \\
    \hline
    Ancilla qubits & $\mathcal{O}(\log d)$ & $\mathcal{O}(1)$ & $\mathcal{O}(1)$ & $\mathcal{O}(1)$ & $\mathcal{O}(\log d)$ \\
    \hline
    Input model & MQC-$U$ & SQC-$U$  & $U_H$ & $U_H$ & SQC-$U$ \\
    \hline 
    Query complexity & $\mathcal{O}(d)$ calls & $\mathcal{O}(d)$ calls & $\mathcal{O}(d)$ calls & $\mathcal{O}(d)$ calls & $\mathcal{O}(d)$ or $\mathcal{O}(\log d)$\footnote{$U=e^{iH}$ with fast-forwardable Hamiltonian $H$ (e.g., efficiently-diagonalizable Hamiltonians \cite{li2025classical,gu2021fast})} calls \\
    \hline
    Normalization condition & $\norm{f_d}_1\le 1$\footnote{This $1$-norm is defined as the absolute sum of coefficients for a Laurent polynomial.} & $\norm{f_d}_\infty\le 1$ & $\norm{f_d}_\infty\le 1-\eta$ & $\norm{f_d}_{1}\le 0.861$\footnote{This $1$-norm is defined as the absolute sum of Chebyshev coefficients for an even-degree polynomial (as defined in \cite[Theorem 2]{ni2026fast}).} & $\max_{0\le k\le 4d-1}\abs{f_d\left(e^{i2\pi k/4d}\right)}\le 1$ \\
    \hline 
    Subnormalization factor & $1$ & $1$ & $1$ & $1$ & $\sqrt{2}$ \\
    \hline
    Numerical stability & Yes & NA & Yes & Yes & Yes \\
    \hline
    \end{tabular}
    \caption{Comparison of QSP algorithms. ``NA'' denotes that a result is neither theoretically nor numerically verified. ``MQC'' and ``SQC'' indicate multi-qubit controlled and single-qubit controlled. ``Query complexity'' stands for the number of input models to be queried in the circuit. For \cite[Theorem 1]{ni2026fast} and \cite[Theorem 2]{ni2026fast}, the precision of phase factors, $\epsilon$, is defined respectively in \cite[Theorem 8]{alexis2026infinite} and \cite[Theorem 6]{dong2024infinite}. We note that the parity constraint in \cite{ni2026fast} can in principle be removed by an LCU combination of the even and odd parts, at the cost of requiring controlled input-model circuits. }
    \label{tab:comparison}
\end{table}
\end{center}


\begin{table}[h]
\centering
\begin{tabular}{|c|c|c|}
\hline
 Method & This work (\cref{cor:complex_diag_encoding})  & \cite[SIABLE]{li2026reducing}\\
\hline
CNOT & $8d-2$ & $8d-2$ \\
 \hline
 Single-qubit gates & $8d$ & $8d+1$ \\
 \hline
R$_y$ gate & $4d$ & $0$ \\
\hline
R$_z$ gate & $4d$ & $8d-1$ \\
\hline
Hadamard gate & $0$ & $2$  \\
\hline
Total gate & $16d-2$ & $16d-1$ \\
\hline
Ancilla qubit & $1$  & $1$ \\
\hline
Circuit depth & $\mathcal{O}(d)$ & $\mathcal{O}(d)$ \\
\hline
Subnormalization factor & $1$ & $1$ \\
\hline
\end{tabular}
\caption{
Comparison between our approach and SIABLE
specialized to a $2^{m+2}$-dimensional diagonal matrix, where
$d=2^m$. The SIABLE gate counts are obtained by specializing its circuit construction
to diagonal matrices, for which the singular-vector unitary reduces to a
diagonal unitary. }
\label{tab:diag_comparison}
\end{table}

\section{Main Result}
\label{sec:results}

For a Hermitian matrix $H$, we consider two well-known input models: a block encoding $U_H = \begin{pmatrix} H & * \\ * & * \end{pmatrix}$ with $\norm{H}\le 1$, and a Hamiltonian simulation $U = e^{iH}$. For each model, our goal is to construct an explicit circuit for a block encoding of a matrix polynomial of $H$. We first recall the standard definition of a block encoding.

\begin{definition}[Block encoding \cite{gilyen2019quantum}]\label{def: block encoding}
For a $2^n \times 2^n$ matrix $A$ and $\alpha_A \in \mathbb{R}^+$, we say that an $(n+a_A)$-qubit unitary operator $U_A$ is an $(\alpha_A,a_A)$-block encoding of $A$ if 
\begin{equation}
    A = \alpha_A\bra{0^{a_A}}U_A\ket{0^{a_A}},
\end{equation}
where $\ket{0^{a_A}}$ is the all-zero state of the $a_A$-qubit ancilla register. $\alpha_A$ is  called the subnormalization factor.
\end{definition}

We summarize our main results as follows. For both results, the core block-encoding circuit is specified in \eqref{eq:complexUBE}. The proofs are deferred to \cref{sec:qsp}. 
\begin{theorem}[QSP with Hamiltonian simulation $U$]
\label{thm:complexQSP}
Let $f_d:S^1\to\mathbb{C}$ be a Laurent polynomial of degree $d=2^m$ satisfying
\begin{equation}
\label{eq:complexfnormalization}
    \max_{0\le k\le 4d-1}\abs{f_d\left(e^{i2\pi k/4d}\right)}\le 1.
\end{equation}
Then, given any $n$-qubit unitary $U=e^{iH}$, one can construct an $(\sqrt{2},m+3)$-block encoding of $f_d(e^{iH})$ using $4d-1$ queries each to controlled-$U$ and controlled-$U^\dagger$, along with $\mathcal{O}(d)$ additional single- and two-qubit gates, in $\mathcal{O}(d\log d)$ classical time.
\end{theorem}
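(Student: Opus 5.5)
We would follow Alase's interpolation-based framework and reduce the theorem to two ingredients: a circuit that realizes ``discrete Fourier interpolation'' in $H$, and an explicit diagonal block encoding of the function values $f_d(\omega^k)$, with $\omega=e^{i2\pi/4d}$ and $k=0,\dots,4d-1$.

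\textbf{Step 1 (interpolation identity).} Write $N=4d=2^{m+2}$. Since $f_d$ has degree $d<N/2$, it is recovered exactly from its samples:
\[
f_d(z)=\frac1N\sum_{j=-d}^{d}\sum_{k=0}^{N-1} f_d(\omega^k)\,\omega^{-jk}z^j .
\]
Substituting $z=e^{iH}$ turns this into the matrix identity for $f_d(e^{iH})$.

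\textbf{Step 2 (circuit for $\sum_j |j\rangle\langle j|\otimes U^j$).} On an $(m+2)$-qubit index register we apply the map $|j\rangle\mapsto|j\rangle\otimes U^j$ for signed $j$. We implement it with binary powers $U^{\pm 2^\ell}$, each controlled on a single qubit, and use the two's-complement sign bit to select $U$ versus $U^\dagger$. A naive count uses many queries. To reach exactly $4d-1$ queries to each of controlled-$U$ and controlled-$U^\dagger$, we would instead implement $U^{j}$ for $j\in\{-(N/2-1),\dots,N/2-1\}$ by a single-qubit-controlled ladder in which each power $U^{2^\ell}$ is realized as $2^\ell$ controlled-$U$ calls. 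Summing $2^\ell$ over $\ell<m+2$ gives $2^{m+2}-1=4d-1$ calls, and likewise for $U^\dagger$.

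\textbf{Step 3 (sandwich structure).} Conjugate by a QFT on the index register, so that the Fourier sum becomes $\mathrm{QFT}^\dagger\, D_f\, \mathrm{QFT}$. Here $D_f=\mathrm{diag}(f_d(\omega^k))$ is block encoded with one extra ancilla, which makes $m+3$ ancillas in total. Projecting the index register onto the uniform superposition $H^{\otimes(m+2)}|0\rangle$ yields $f_d(e^{iH})$ up to a constant. The $\sqrt2$ subnormalization arises from Alase's treatment: the sample values satisfy only $|f_d(\omega^k)|\le1$ rather than a bound on the whole circle, and the degree-$d$ restriction is combined with a half-range symmetrization. We would invoke that framework's theorem directly, taking the diagonal block encoding as an oracle, and verify the ancilla and query counts above.

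\textbf{Step 4 (diagonal block encoding; the main obstacle).} We build $D_f$ with one ancilla as a product of a UCR-$R_y$ and a UCR-$R_z$ on the ancilla, both controlled by the index register. Write $f_d(\omega^k)=r_k e^{i\phi_k}$. Choose $\theta_k=2\arccos r_k$, which is valid because $r_k\le1$ by \eqref{eq:complexfnormalization}, and choose the $R_z$ angle so that the $\langle0|\cdot|0\rangle$ entry equals $r_k e^{i\phi_k}$. A global phase issue arises because $R_z$ contributes $e^{-i\phi/2}$; we fix it by setting the angle to $-2\phi_k$, which also handles the ancilla's $|1\rangle$ component.

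Each UCR decomposes into $2^{m+2}$ rotations and CNOTs. Its angles are obtained from the target angles by a Walsh--Hadamard transform in $\mathcal{O}(N\log N)$ time. The samples $f_d(\omega^k)$ themselves come from one FFT of the coefficients, also $\mathcal{O}(d\log d)$. Together this gives $\mathcal{O}(d)$ gates and $\mathcal{O}(d\log d)$ classical time.

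The delicate part is ensuring that the cascade of the two UCRs gives exactly the diagonal entries with no spurious relative phases. We also need to check that the ancilla-count bookkeeping, $m+2$ index qubits plus one, matches $m+3$.
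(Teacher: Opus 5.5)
\paragraph{Verdict.} Your overall route is the paper's: invoke Alase's Lemma~9 and supply the missing diagonal encoding as $F^{(z)}_{J,a}(\boldsymbol{\gamma})F^{(y)}_{J,a}(\boldsymbol{\beta})$ with $\beta_k=2\arccos r_k$, $\gamma_k=-2\theta_k$, Walsh--Hadamard angles and FFT samples. Your Step~4 and the complexity accounting coincide with the paper's diagonal-encoding lemma and corollary. Nothing there is delicate, since $\bra{0}R_z(\gamma_k)R_y(\beta_k)\ket{0}=e^{-i\gamma_k/2}\cos(\beta_k/2)$ directly.

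\paragraph{Where it fails: two's complement.} The gap is in Steps~2--3. The sandwich you describe is not the one in Lemma~9, and it does not produce $f_d(e^{iH})$. Note also that the circuit realizes the double-sum identity of Lemma~8, not your single-sum formula. Suppose the index register starts in $\sum_j\psi_j\ket{j}$, the circuit applies $U^{e(j)}$ controlled on $\ket{j}$, then $V_{4d}$, then the inverse, and finally projects onto $\sum_{j'}\phi_{j'}\ket{j'}$. The block is then $\sum_{j,j'}\overline{\phi_{j'}}\psi_j\,U^{e(j)-e(j')}(V_{4d})_{j'j}$. Here $(V_{4d})_{j'j}=\frac{1}{4d}\sum_k f_d(z_k)z_k^{j'-j}$ is the sum of the Laurent coefficients $c_l$ over $l\equiv j-j'\pmod{4d}$. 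Exactness therefore needs every true difference $|\delta|\le d$ to receive equal total weight, and no aliased difference to survive. Two's complement sets $e(j)=j-4d$ on the upper half of the register. Since $z_k^{4d}=1$, $V_{4d}$ cannot see this shift, but $U^{4d}\neq\mathbb{I}$. Already for $d=1$, with Lemma~9's windows and the factor $\sqrt2$, you get $c_0+\tfrac12c_1(U+U^{-3})+\tfrac12c_{-1}(U^{-1}+U^{3})$ instead of $f_d(U)$.

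\paragraph{Where it fails: uniform projection.} Projecting both sides onto $\mathrm{H}^{\otimes(m+2)}\ket{0}$ fails even with unsigned powers. Each $\delta\in(-4d,4d)$ then occurs $4d-|\delta|$ times. This gives the Fej\'er-weighted sum $\sum_{|\delta|\le d}\frac{4d-|\delta|}{4d}c_\delta U^\delta$, plus aliased terms proportional to $c_{\delta\mp4d}U^{\delta}$ for $3d\le|\delta|<4d$.

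\paragraph{The missing ingredient.} You need the asymmetric window. The input is $\ket{+_{4d}}$, prepared by $P_{\mathrm{in}}=\mathrm{H}^{\otimes(m+2)}$. The output is $\ket{+_{2d}}$, uniform on $j'\in\{d,\dots,3d-1\}$, prepared by $P_{\mathrm{out}}$ (Hadamards plus one zero-controlled NOT). With $j\in\{0,\dots,4d-1\}$, every $|\delta|\le d$ then occurs exactly $2d$ times, and the differences $d<|\delta|\le3d-1$ carry no coefficient. The $\sqrt2$ is simply the reciprocal of this window's normalization $2d/\sqrt{2d\cdot4d}=1/\sqrt2$. It is not a consequence of bounding $f_d$ only at the samples.

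\paragraph{Query count.} The count comes from the unsigned ladder $\prod_{r=0}^{m+1}(\mathbb{I}\otimes\ket{0}\!\bra{0}_{J_r}+U^{2^r}\otimes\ket{1}\!\bra{1}_{J_r})$, which uses $\sum_r2^r=4d-1$ controlled-$U$ calls. The $4d-1$ controlled-$U^\dagger$ calls come from uncomputing it, not from a sign bit. Your stated range $\{-(N/2-1),\dots,N/2-1\}$ does not match the ladder you count.

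Using Lemma~9 verbatim, as you say you would, with these ingredients turns your outline into the paper's proof.
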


The following corollary specializes \cite[Theorem 5]{alase2025quantum}
to block encodings of matrix polynomials and specifies the classical construction time
using our explicit diagonal block encoding in \cref{sec:diag}.
\begin{corollary}[QSP with block encoding $U_H$]
\label{thm:chebyshevQSP}
Let $p_d(x):[-1,1]\to\mathbb{C}$ be a polynomial of degree $d=2^m$ satisfying 
\begin{equation}
\label{eq:eta_cheb}
    \max_{0\le r\le 4d-1}\abs{p_d\left(\cos\frac{2\pi r}{4d}\right)}\le 1.
\end{equation}
Then, given a $(1,a_H)$-block encoding of $H$, $U_H$, one can construct an $(\sqrt{2},m+4+a_H)$-block encoding of $p_d(H)$ using at most $4d-1$ queries each to controlled-$U_H$ and controlled-$U_H^\dagger$, along with $\mathcal{O}(d a_H)$ additional single- and two-qubit gates, in $\mathcal{O}(d\log d)$ classical time.
\end{corollary}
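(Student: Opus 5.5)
The plan is to reduce the corollary to \cref{thm:complexQSP} via the standard qubitization walk operator. From the $(1,a_H)$-block encoding $U_H$ we form the walk operator
\[
W = (2\ket{0^{a_H}}\bra{0^{a_H}}\otimes I - I)\, U_H .
\]
If $U_H$ is not Hermitian, we first symmetrize it with one extra qubit, giving a Hermitian $(1,a_H+1)$-block encoding. This extra qubit accounts for the $+1$ in $m+4+a_H$ compared with $m+3$.

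By Jordan's lemma, $W$ splits into two-dimensional invariant subspaces, one for each eigenvector $\ket{\lambda}$ of $H$. On each subspace $W$ acts as a rotation with eigenvalues $e^{\pm i\theta_\lambda}$, where $\cos\theta_\lambda=\lambda$. Consequently, for the Laurent polynomial
\[
f_d(z)=\tfrac12\bigl(p_d(\tfrac{z+z^{-1}}{2})\bigr),
\]
written in Chebyshev form as $f_d(e^{i\theta})=p_d(\cos\theta)$, the projected block gives
\[
\bra{0}f_d(W)\ket{0} = \sum_k c_k T_k(H) = p_d(H).
\]
Concretely, write $p_d(x)=\sum_{k=0}^d c_kT_k(x)$ and set $f_d(z)=c_0+\sum_{k\ge1}\tfrac{c_k}{2}(z^k+z^{-k})$. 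Then $f_d(e^{i\theta})=p_d(\cos\theta)$ exactly, and $f_d$ has degree $d$. The identity $\bra{0^{a_H}}T_k(W)\ket{0^{a_H}}=T_k(H)$ is the standard qubitization fact, and I would cite it rather than re-derive it.

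Next we check the normalization \eqref{eq:complexfnormalization}. We have
\[
\max_k\abs{f_d(e^{i2\pi k/4d})}=\max_r\abs{p_d(\cos\tfrac{2\pi r}{4d})}\le 1
\]
by \eqref{eq:eta_cheb}, since the $4d$ angles $2\pi k/4d$ map onto exactly the cosine nodes listed there. We then apply \cref{thm:complexQSP} with the unitary $W$ in place of $e^{iH}$. The theorem is stated for $U=e^{iH}$, but its construction only uses the spectral decomposition of a unitary. It therefore applies to any unitary $W$, giving a $(\sqrt2,m+3)$-block encoding of $f_d(W)$. Projecting the $a_H$ (plus symmetrization) ancillas onto zero stacks the two encodings and yields a $(\sqrt2,m+4+a_H)$-block encoding of $p_d(H)$.

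For resources, each controlled-$W$ uses one controlled-$U_H$ and a controlled reflection on $a_H$ qubits, which costs $\mathcal{O}(a_H)$ gates. Controlled-$W^\dagger$ works the same way with $U_H^\dagger$. This gives $4d-1$ queries each and $\mathcal{O}(da_H)$ extra gates. For classical time, the Chebyshev coefficients $c_k$ are needed, or more directly the values $p_d(\cos(2\pi r/4d))$, which is what the diagonal encoding actually consumes. Obtaining these from the monomial or Chebyshev coefficients is a DCT/FFT of size $\mathcal{O}(d)$, costing $\mathcal{O}(d\log d)$. Adding the $\mathcal{O}(d\log d)$ angle computation from \cref{thm:complexQSP} keeps the total at $\mathcal{O}(d\log d)$.

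The main obstacle is justifying that the block-encoding conclusion of \cref{thm:complexQSP} still holds when $U=e^{iH}$ is replaced by the walk operator $W$. In particular, the ancilla projection of $f_d(W)$ must act correctly on the invariant subspaces containing $\ket{0^{a_H}}\ket{\lambda}$, including the degenerate cases $\lambda=\pm1$, where the subspace is one-dimensional. I would handle this by arguing that the construction yields $\bra{0^{m+3}}\,\cdot\,\ket{0^{m+3}}=f_d(W)/\sqrt2$ as an operator identity in $W$, valid for any unitary. The qubitization identity then finishes the argument on each invariant subspace separately.
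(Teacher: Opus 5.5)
Your route is the paper's route. You build the qubitization walk $W_H$ with $\bra{0}W_H^k\ket{0}=T_k(H)$ on the walk ancillas. You set $f_d(z)=c_0+\tfrac12\sum_{k\ge1}c_k(z^k+z^{-k})=p_d\bigl(\tfrac{z+z^{-1}}{2}\bigr)$, observe that \eqref{eq:eta_cheb} and \eqref{eq:complexfnormalization} coincide at the nodes $z_k$, run the construction of \cref{thm:complexQSP} with $U=W_H$, and bound the classical cost by a DCT. What you call the main obstacle is not one. \cref{lem: lemma 9} is an operator identity valid for every unitary $U$, and every unitary is $e^{iH'}$ for some Hermitian $H'$. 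So $\bigl(\bra{0^{m+3}}\otimes\mathbb{I}\bigr)U_{BE}^{(\mathbb C)}(W_H)\bigl(\ket{0^{m+3}}\otimes\mathbb{I}\bigr)=f_d(W_H)/\sqrt2$ holds without any invariant-subspace or degenerate-eigenvalue analysis. This is exactly how the paper uses it.

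Three points need tightening.

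\emph{Stray factor.} Your first display of $f_d$ carries an extra factor $\tfrac12$; the ``Concretely'' definition is the correct one.

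\emph{Chebyshev identity.} The identity you cite as $\bra{0}T_k(W)\ket{0}=T_k(H)$ is false if $T_k(W)$ means the Chebyshev polynomial evaluated at $W$. What you need is $\bra{0}W^{k}\ket{0}=T_k(H)$ together with $\bra{0}W^{-k}\ket{0}=\bigl(\bra{0}W^{k}\ket{0}\bigr)^\dagger=T_k(H)$, which uses $W^{-k}=(W^k)^\dagger$ and the hermiticity of $T_k(H)$. This adjoint step is the one computation the paper writes out, and it is what handles the $z^{-k}$ terms of $f_d$.

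\emph{Query count.} Your tally is inconsistent with your own symmetrization. If $U_H$ is not Hermitian, the walk contains $\ket{0}\bra{1}\otimes U_H+\ket{1}\bra{0}\otimes U_H^\dagger$. Then each controlled-$W$, and each controlled-$W^\dagger$, consumes one controlled-$U_H$ and one controlled-$U_H^\dagger$. Over $\mathcal{W}$ and $\mathcal{W}^\dagger$ the naive count is therefore $2(4d-1)$ of each, not $4d-1$. Your ``one controlled-$U_H$ per controlled-$W$'' is accurate only when $U_H$ is already Hermitian, in which case no extra qubit is needed and you can pad with an idle ancilla to reach $m+4+a_H$. The paper does not derive this count either; it takes it from the cited qubitization theorem. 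To actually establish the stated bound, either restrict to Hermitian $U_H$ or state the count for the symmetrized walk explicitly.
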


The normalization conditions in \eqref{eq:complexfnormalization} and \eqref{eq:eta_cheb} are induced from our circuit constructions for diagonal block encodings. We discuss this further in \cref{sec:diag}.

\subsection{Review of QSP without angle finding}

An interpolation-based formulation of QSP introduced by Alase
\cite{alase2025quantum} avoids finding phase factors and resembles LCU in
that it directly encodes information about the target function into a
quantum circuit. As a crucial component, this approach assumes a diagonal block encoding of function values at specialized interpolation points as restated as follows. 

\begin{definition}[Diagonal encoding of a function]
Let $f:S^1\to\mathbb{C}$ be a function such that $\|f\|_\infty \le 1$. For $d=2^m$ with $m\in\mathbb{Z}^+$, we say that an $(m+3)$-qubit unitary operator $U_{f,4d}$ encodes $f$ if
\begin{equation}
\label{eq:diag_encoding_function}
    \bra{j'}_J\bra{0}_a
    U_{f,4d}
    \ket{j}_J\ket{0}_a
    =
    \delta_{jj'}
    f\left(e^{i2\pi j/4d}\right),
    \qquad
    j,j'\in\{0,1,\ldots,4d-1\},
\end{equation}where $J$ denotes an $(m+2)$-qubit register and $a$ corresponds to a single ancilla qubit.
\end{definition}

The choice of these interpolation points is due to the following lemma.

\begin{lemma}[{\cite[Lemma 8]{alase2025quantum}}]
\label{lem:laurentpolyapprox}
Let $f_d:S^1 \to \mathbb{C}$ be a Laurent polynomial of degree $d$, and let 
$\{z_k = \exp\left(2\pi ik/4d\right) \mid k=0,\dots,4d-1\}$ be a set of interpolation points. Then, 
\begin{equation}
\label{eq:polyapprox}
        f_d(z) = \frac{1}{8d^2}\sum_{j'=d}^{3d-1}\sum_{j,k=0}^{4d-1} f_d(z_k)\left(\frac{z}{z_k}\right)^{j-j'}.
\end{equation}
\end{lemma}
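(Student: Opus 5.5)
The plan is to verify \eqref{eq:polyapprox} directly: expand $f_d$ in monomials, substitute, and reduce both sums over $j,k$ to geometric sums that collapse by discrete orthogonality of $4d$-th roots of unity. Write $f_d(z)=\sum_{l=-d}^{d} c_l z^l$ and set $\omega=e^{2\pi i/4d}$, so $z_k=\omega^k$. By linearity in $f_d$, it suffices to prove the identity for a single monomial $f_d(z)=z^l$ with $|l|\le d$, namely
\begin{equation*}
z^l=\frac{1}{8d^2}\sum_{j'=d}^{3d-1}\sum_{j,k=0}^{4d-1}\omega^{kl}\,\omega^{-k(j-j')}\,z^{j-j'} .
\end{equation*}

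First I carry out the sum over $k$. We have $\sum_{k=0}^{4d-1}\omega^{k(l-j+j')}=4d$ exactly when $j-j'\equiv l \pmod{4d}$, and the sum is $0$ otherwise. The right-hand side therefore becomes
\begin{equation*}
\frac{1}{2d}\sum_{j'=d}^{3d-1}\ \sum_{\substack{0\le j\le 4d-1\\ j\equiv j'+l \ (\mathrm{mod}\ 4d)}} z^{\,j-j'} .
\end{equation*}

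The key step, and the only place where the choice of $4d$ points and the window $d\le j'\le 3d-1$ matters, is to show the congruence is always resolved by the honest equality $j=j'+l$. Since $d\le j'\le 3d-1$ and $-d\le l\le d$, we get $0\le j'+l\le 4d-1$. So for each $j'$ there is exactly one admissible $j$ in $\{0,\dots,4d-1\}$, namely $j=j'+l$, and it gives $z^{j-j'}=z^l$ with no wraparound.

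Summing over the $2d$ values of $j'$ then gives $\frac{1}{2d}\cdot 2d\cdot z^l=z^l$, as required. Summing over monomials with the coefficients $c_l$ finishes the proof. I expect no real obstacle: the only delicate point is the range check that rules out aliasing.
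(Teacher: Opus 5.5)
Your proof is correct and complete. After reducing to a monomial $z^l$ with $|l|\le d$, the $k$-sum forces $j\equiv j'+l \pmod{4d}$. The window $d\le j'\le 3d-1$ gives $0\le j'+l\le 4d-1$, so the congruence is an honest equality, and the constants combine as $\frac{1}{8d^2}\cdot 4d\cdot 2d=1$.

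There is no argument in the paper to compare yours with. The lemma is quoted verbatim from \cite[Lemma 8]{alase2025quantum} and used without proof, so your computation supplies the justification the paper leaves out.

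Your proof also makes explicit two points the paper leaves implicit. First, you establish the identity coefficientwise in $\mathbb{C}[z,z^{-1}]$, so it holds with $z$ replaced by any invertible matrix. That is exactly what is needed to pass to $f_d(U)$ in \cref{lem: lemma 9}. Second, your computation shows that each single $j'$ in the window already reproduces $f_d(z)$, so the average over $j'$ is not needed for exactness. Its role is in the circuit: projecting onto $\ket{+_{2d}}_J$ instead of a single $\ket{j'}_J$ adds $2d$ identical contributions coherently. This improves the subnormalization from $2\sqrt{d}$ to $\sqrt{2}$.
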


Based on this expansion, the original work constructs an $(m+3)$-qubit
block encoding of $f_d(U)$ as follows.

\begin{lemma}[{\cite[Lemma 9]{alase2025quantum}}]\label{lem: lemma 9}
Let $f_d$ be a Laurent polynomial of degree $d$, and define the states
\begin{equation}
    \ket{+_{2d}}_J
    =
    \frac{1}{\sqrt{2d}}
    \sum_{j=d}^{3d-1} \ket{j}_J,
    \qquad
    \ket{+_{4d}}_J
    =
    \frac{1}{\sqrt{4d}}
    \sum_{j=0}^{4d-1} \ket{j}_J.
    \label{eq:plus-states}
\end{equation}
Then,
\begin{equation}
    f_d(U)
    =
    \sqrt{2}\,
    \bra{+_{2d}}_J
    \mathcal{W}_U^\dagger
    \bigl[\mathbb{I}\otimes V_{4d}\bigr]
    \mathcal{W}_U
    \ket{+_{4d}}_J,
\end{equation}
where
\begin{align*}
    \mathcal{W}_U &= \sum_{j=0}^{4d-1}U^j\otimes \ketbra{j}, \\
    V_{4d} &= \operatorname{QFT}_J(\bra{0}U_{f,4d}\ket{0})\operatorname{QFT}_J^\dag.
\end{align*}
\end{lemma}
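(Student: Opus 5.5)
The plan is a direct computation: expand the right-hand side into an explicit Laurent polynomial in $U$ and match it term by term with \cref{lem:laurentpolyapprox}. Throughout, let $\omega=e^{i2\pi/4d}$, so that $z_k=\omega^k$.

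\textbf{Step 1: matrix elements of $V_{4d}$.} By the definition of the diagonal encoding \eqref{eq:diag_encoding_function}, $\bra{0}U_{f,4d}\ket{0}=\sum_k f_d(z_k)\ketbra{k}$. I take the convention $\operatorname{QFT}_J\ket{k}=\frac{1}{\sqrt{4d}}\sum_{j}\omega^{jk}\ket{j}$. Then
\begin{equation*}
\bra{j'}V_{4d}\ket{j}=\frac{1}{4d}\sum_{k=0}^{4d-1} f_d(z_k)\,\omega^{(j'-j)k}=\frac{1}{4d}\sum_{k=0}^{4d-1} f_d(z_k)\left(\frac{1}{z_k}\right)^{j-j'} .
\end{equation*}
The sign convention of the QFT has to be fixed consistently with \eqref{eq:polyapprox}. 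The opposite sign would replace $z_k$ by $z_{-k}$ inside the exponent but not inside $f_d$, so this is the one point that needs care.

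\textbf{Step 2: push the states through the circuit.}
\begin{itemize}
\item First, $\mathcal{W}_U\ket{+_{4d}}_J=\frac{1}{\sqrt{4d}}\sum_{j}U^j\otimes\ket{j}$.
\item Applying $\mathbb{I}\otimes V_{4d}$ gives $\frac{1}{\sqrt{4d}}\sum_{j,j'}\bra{j'}V_{4d}\ket{j}\,U^j\otimes\ket{j'}$.
\item Applying $\mathcal{W}_U^\dagger=\sum_{j'}U^{-j'}\otimes\ketbra{j'}$ turns $U^j$ into $U^{j-j'}$.
\item Contracting with $\bra{+_{2d}}_J$ restricts to $j'\in\{d,\dots,3d-1\}$ and contributes a factor $1/\sqrt{2d}$.
\end{itemize}
Collecting the scalars gives $\sqrt2\cdot\frac{1}{\sqrt{4d}\sqrt{2d}}\cdot\frac{1}{4d}=\frac{1}{8d^2}$. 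Hence the right-hand side equals
\begin{equation*}
\frac{1}{8d^2}\sum_{j'=d}^{3d-1}\sum_{j,k=0}^{4d-1} f_d(z_k)\,z_k^{-(j-j')}\,U^{j-j'}.
\end{equation*}

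\textbf{Step 3: transfer \cref{lem:laurentpolyapprox} from $z$ to $U$.} Define the Laurent polynomial
\begin{equation*}
g(z)=\frac{1}{8d^2}\sum_{j'=d}^{3d-1}\sum_{j,k=0}^{4d-1} f_d(z_k)\,(z/z_k)^{j-j'} .
\end{equation*}
By \cref{lem:laurentpolyapprox}, $g(z)=f_d(z)$ for all $z\in S^1$. Two Laurent polynomials that agree on the infinite set $S^1$ have identical coefficients, so $g$ and $f_d$ coincide as formal Laurent polynomials. Substituting the unitary $U$ is a ring homomorphism on Laurent polynomials, since $U$ is invertible with $U^{-1}=U^\dagger$. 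Therefore $g(U)=f_d(U)$, and $g(U)$ is exactly the expression obtained in Step 2. Alternatively, one can diagonalize $U$ spectrally and apply the scalar identity eigenvalue by eigenvalue, because every eigenvalue lies on $S^1$.

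The only genuine obstacle I expect is the bookkeeping of conventions in Step 1: the QFT sign, whether the diagonal encoding sits as $\operatorname{QFT}\,D\,\operatorname{QFT}^\dagger$ or $\operatorname{QFT}^\dagger D\,\operatorname{QFT}$, and the ordering $j-j'$ versus $j'-j$ in $\mathcal{W}_U^\dagger\cdots\mathcal{W}_U$. These must align so that the exponent of $U$ and the exponent of $z_k^{-1}$ agree. I would fix the convention above and verify it against \eqref{eq:polyapprox} before carrying out the rest of the computation.
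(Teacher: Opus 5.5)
Your proposal is correct: the matrix elements $\bra{j'}V_{4d}\ket{j}$ under the standard QFT sign convention, the $1/(8d^2)$ prefactor, and the transfer of the scalar identity in \cref{lem:laurentpolyapprox} to $U$ (by coefficient matching or by spectral decomposition) all check out. The paper does not reprove this lemma; it cites it from Alase as built on that interpolation expansion, so your direct computation is essentially the intended argument.
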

Here, we restate the result of \cite{alase2025quantum} by explicitly specifying the maximum degree of $U$ in $\mathcal{W}_U$ as $4d-1$, which is justified by \cite[Lemma 7]{alase2025quantum}. 

Note that, aside from the diagonal block encoding $U_{f,4d}$, every unitary operation in the block encoding of $f_d(U)$ can be constructed explicitly. In the following section, we therefore propose an explicit circuit construction for this diagonal block encoding. 

\subsection{Diagonal encoding from unitary controlled rotations}\label{sec:diag}

In this section, we introduce a  method for constructing a diagonal block encoding for an arbitrary complex diagonal matrix using unitary controlled rotations (UCRs), compared to previous approaches \cite{rattew2023non,camps2022fable}. We then apply this general construction to our application, that is, encoding the function values of a Laurent polynomial.

Consider a general $N \times N$ diagonal matrix $D = \sum_{k=0}^{N-1} \lambda_k \ketbra{k}_J$, satisfying $\max_k |\lambda_k| \le 1$. Let us write each diagonal entry in polar form as $\lambda_k = |\lambda_k|e^{i\theta_k}$. To proceed, we define the rotation angles
\begin{equation}\label{eq:general_angles}
    r_k := |\lambda_k| \in [0,1], \qquad
    \beta_k := 2\arccos r_k, \qquad
    \gamma_k := -2\theta_k.
\end{equation}
Using these angles, we construct the following UCRs with rotation $Y$'s ($R_y$) and rotation $Z$'s ($R_z$),
\begin{align}
    F^{(y)}_{J,a}(\boldsymbol{\beta})
    &:=
    \sum_{k=0}^{N-1}
    \ketbra{k}_J
    \otimes
    R_y(\beta_k)_a,
    \\
    F^{(z)}_{J,a}(\boldsymbol{\gamma})
    &:=
    \sum_{k=0}^{N-1}
    \ketbra{k}_J
    \otimes
    R_z(\gamma_k)_a,
\end{align}
where $\boldsymbol{\beta}:=(\beta_0,\ldots,\beta_{N-1})$, $\boldsymbol{\gamma}:=(\gamma_0,\ldots,\gamma_{N-1})$, and $a$ denotes a single ancilla qubit.

\begin{lemma}[General complex diagonal encoding]
\label{lem:general_diagonal_encoding}
For any diagonal matrix $D = \sum_{k=0}^{N-1} \lambda_k \ketbra{k}_J$ with $\max_k |\lambda_k| \le 1$, the unitary
\begin{equation}
\label{eq:general_direct_complex_diag}
    U_D
    :=
    F^{(z)}_{J,a}(\boldsymbol{\gamma})
    F^{(y)}_{J,a}(\boldsymbol{\beta})
\end{equation}
is a $(1,1)$-block encoding of $D$, namely,
\begin{equation}
\label{eq:general_direct_complex_diag_block}
    \left(
        \mathbb{I}_J\otimes\bra{0}_a
    \right)
    U_D
    \left(
        \mathbb{I}_J\otimes\ket{0}_a
    \right)
    =
    D.
\end{equation}
\end{lemma}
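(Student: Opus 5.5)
The plan is a direct computation. Both $F^{(y)}_{J,a}(\boldsymbol{\beta})$ and $F^{(z)}_{J,a}(\boldsymbol{\gamma})$ are block diagonal with respect to the computational basis of $J$, so their product is too:
\begin{equation*}
    U_D=\sum_{k=0}^{N-1}\ketbra{k}_J\otimes R_z(\gamma_k)R_y(\beta_k).
\end{equation*}
Sandwiching this between $\mathbb{I}_J\otimes\bra{0}_a$ and $\mathbb{I}_J\otimes\ket{0}_a$ gives
\begin{equation*}
    \sum_k \ketbra{k}_J\,\bra{0}R_z(\gamma_k)R_y(\beta_k)\ket{0}.
\end{equation*}
So the lemma reduces to a single-qubit identity: show that $\bra{0}R_z(\gamma_k)R_y(\beta_k)\ket{0}=\lambda_k$ for each $k$. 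Because the operator is block diagonal, no off-diagonal terms in $J$ can appear, and unitarity of $U_D$ is automatic since it is a product of two unitaries.

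For the single-qubit step I will fix the standard conventions
\begin{equation*}
    R_y(\beta)=e^{-i\beta Y/2},\qquad R_z(\gamma)=e^{-i\gamma Z/2}=\operatorname{diag}\bigl(e^{-i\gamma/2},e^{i\gamma/2}\bigr).
\end{equation*}
Then
\begin{equation*}
    R_y(\beta_k)\ket{0}=\cos(\beta_k/2)\ket{0}+\sin(\beta_k/2)\ket{1}.
\end{equation*}
Since $R_z$ is diagonal, only the $\ket{0}$ component survives the projection, giving
\begin{equation*}
    \bra{0}R_z(\gamma_k)R_y(\beta_k)\ket{0}=e^{-i\gamma_k/2}\cos(\beta_k/2).
\end{equation*}
Substituting the angles \eqref{eq:general_angles}, we get $\cos(\beta_k/2)=\cos(\arccos r_k)=r_k$, which is valid because $r_k\in[0,1]$ lies in the domain of $\arccos$. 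We also get $e^{-i\gamma_k/2}=e^{i\theta_k}$. The product is therefore $r_ke^{i\theta_k}=\lambda_k$.

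The only points needing care are bookkeeping rather than genuine obstacles. First, the hypothesis $\max_k|\lambda_k|\le1$ is used exactly to make $\arccos r_k$ well defined. Second, when $\lambda_k=0$ the phase $\theta_k$ is not determined, but any choice works since the factor $r_k=0$ kills it. Third, the sign and half-angle conventions for $R_y,R_z$ must match \eqref{eq:general_angles}. If the paper used the opposite sign convention for $R_z$, the factor $-2$ in $\gamma_k$ would absorb it, so I will state the convention explicitly at the start of the proof.
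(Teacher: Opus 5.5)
Your proposal is correct and follows the paper's proof step for step. Like the paper, you write $U_D=\sum_k \ketbra{k}_J\otimes R_z(\gamma_k)R_y(\beta_k)$, compute $\bra{0}R_z(\gamma_k)R_y(\beta_k)\ket{0}=e^{-i\gamma_k/2}\cos(\beta_k/2)=r_ke^{i\theta_k}$ under the same standard conventions, and sandwich with $\bra{0}_a,\ket{0}_a$. One small correction to your closing aside: under the opposite $R_z$ sign convention, $\gamma_k=-2\theta_k$ would give $\overline{\lambda_k}$ and would not absorb the sign (you would need $\gamma_k=+2\theta_k$), but since you fix the standard convention this does not affect your argument.
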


\begin{proof}
By construction, 
\begin{equation}
    U_D
    =
    \sum_{k=0}^{N-1}
    \ketbra{k}_J
    \otimes
    R_z(\gamma_k)R_y(\beta_k).
\end{equation}
Using the conventional definitions of rotation $Y$ and $Z$ gates,
\begin{equation}
    R_y(\beta)
    =
    \begin{pmatrix}
        \cos(\beta/2) & -\sin(\beta/2)\\
        \sin(\beta/2) & \cos(\beta/2)
    \end{pmatrix},
    \qquad
    R_z(\gamma)
    =
    \begin{pmatrix}
        e^{-i\gamma/2} & 0\\
        0 & e^{i\gamma/2}
    \end{pmatrix},
\end{equation}
we have
\begin{equation}
    \bra{0}
    R_z(\gamma_k)R_y(\beta_k)
    \ket{0}=
    e^{-i\gamma_k/2}
    \cos\left(\frac{\beta_k}{2}\right)
    =
    e^{i\theta_k}r_k
    =
    \lambda_k,
\end{equation}
by the definition in \eqref{eq:general_angles}. Finally, we observe
\begin{equation}
\left(
    \mathbb{I}_J\otimes\bra{0}_a
\right)
U_D
\left(
    \mathbb{I}_J\otimes\ket{0}_a
\right)=
\sum_{k=0}^{N-1}
\ketbra{k}_J\,
\bra{0}
R_z(\gamma_k)R_y(\beta_k)
\ket{0}=
\sum_{k=0}^{N-1}
\lambda_k\ketbra{k}_J = D,
\end{equation}
which completes the proof.
\end{proof}
We now apply \cref{lem:general_diagonal_encoding} to our specific target application: encoding the function values of a Laurent polynomial $f_d \not\equiv 0$ on the evaluation points $\{z_k\}_{k=0}^{4d-1}$. This leads directly to the following corollary.

\begin{corollary}[Complex diagonal encoding for Laurent polynomials]
\label{cor:complex_diag_encoding}
Let $f_d \not\equiv 0$ be a Laurent polynomial, and define the normalization factor $\eta := \max_{0\le k\le 4d-1}|f_d(z_k)|$. By defining the rotation angle vectors $\boldsymbol{\beta}$ and $\boldsymbol{\gamma}$ from the normalized entries $\lambda_k := \frac{f_d(z_k)}{\eta}$ according to \eqref{eq:general_angles}, the unitary
\begin{equation}
\label{eq:U_f_4d}
    U_{f,4d}^{(\mathbb C)} := F^{(z)}_{J,a}(\boldsymbol{\gamma}) F^{(y)}_{J,a}(\boldsymbol{\beta})
\end{equation}
yields an $(\eta, 1)$-block encoding of the target diagonal matrix $\sum_{k=0}^{4d-1} f_d(z_k)\ketbra{k}_J$. Namely,
\begin{equation}
\label{eq:direct_complex_diag_block_app}
    \left(
        \mathbb{I}_J\otimes\bra{0}_a
    \right)
    U_{f,4d}^{(\mathbb C)}
    \left(
        \mathbb{I}_J\otimes\ket{0}_a
    \right)
    =
    \frac{1}{\eta}
    \sum_{k=0}^{4d-1}
    f_d(z_k)\ketbra{k}_J.
\end{equation}
Furthermore, the block encoding $U_{f,4d}^{(\mathbb C)}$ can be constructed using $8d$ single-qubit rotations and $8d-2$ CNOT gates, with the required rotation angles computed in $\mathcal{O}(d\log d)$ time.
\end{corollary}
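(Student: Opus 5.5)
The plan has three parts: the block-encoding identity, the gate count, and the classical cost of computing the angles.

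\emph{Block-encoding identity.} Since $f_d\not\equiv 0$ and a nonzero Laurent polynomial of degree $d$ has at most $2d$ zeros on $S^1$ (multiply by $z^d$), it cannot vanish at all $4d$ points $z_k$. Hence $\eta>0$. The normalized entries $\lambda_k=f_d(z_k)/\eta$ then satisfy $\max_k|\lambda_k|=1$. We can therefore invoke \cref{lem:general_diagonal_encoding} with $N=4d$ and $D=\sum_k \lambda_k\ketbra{k}_J$. This gives $(\mathbb{I}_J\otimes\bra{0}_a)U_{f,4d}^{(\mathbb C)}(\mathbb{I}_J\otimes\ket{0}_a)=D=\frac{1}{\eta}\sum_k f_d(z_k)\ketbra{k}_J$, which is \eqref{eq:direct_complex_diag_block_app}, i.e.\ an $(\eta,1)$-block encoding. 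For zero entries the phase $\theta_k$ is arbitrary; set it to $0$.

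\emph{Gate count.} We use the standard decomposition of a uniformly controlled rotation (Möttönen et al.) with $n=m+2$ control qubits and $N=2^n=4d$. Each UCR $F^{(y)}$ or $F^{(z)}$ decomposes into $N$ single-qubit rotations about the same axis and $N$ CNOTs arranged in a Gray-code pattern. The rotation angles are $\boldsymbol{\alpha}=2^{-n}M\boldsymbol{\beta}$, where $M_{ij}=(-1)^{b_{j}\cdot g_{i}}$, $b_j$ is the binary code of $j$ and $g_i$ the Gray code of $i$. This gives $4d+4d=8d$ rotations. For the CNOTs, the last CNOT of the $R_y$ multiplexor and the first CNOT of the $R_z$ multiplexor act on the same control and target, so they cancel. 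More precisely, we mirror the second UCR's Gray-code ordering so that the adjacent CNOTs coincide. Mirroring is legitimate because reversing the circuit of a UCR yields a UCR with the rotation order reversed. Together with the freedom to drop the final CNOT when the circuits are composed, this should give $8d-2$. I expect this bookkeeping to be the fiddly step. The thing to check is that the boundary CNOTs are both on the most significant control, which is where the Gray-code sequence starts and ends.

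\emph{Classical cost.} The first step is computing $f_d(z_k)$ for all $k$. Write $f_d(z)=\sum_{j=-d}^{d}c_jz^j$; then $f_d(z_k)=\sum_j c_j\omega^{jk}$ with $\omega=e^{2\pi i/4d}$. This is a length-$4d$ DFT of the zero-padded, cyclically shifted coefficient vector, costing $\mathcal{O}(d\log d)$ by FFT. Computing $\eta$, $r_k$, $\beta_k$ and $\gamma_k$ then takes $\mathcal{O}(d)$. Finally, the matrix $M$ is a Walsh--Hadamard matrix with rows permuted into Gray-code order. So $M\boldsymbol{\beta}$ and $M\boldsymbol{\gamma}$ can each be computed by a fast Walsh--Hadamard transform in $\mathcal{O}(N\log N)=\mathcal{O}(d\log d)$, followed by an $\mathcal{O}(d)$ permutation. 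All steps are exact up to arithmetic, giving total $\mathcal{O}(d\log d)$ time.
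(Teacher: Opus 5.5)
Your proposal is correct and follows essentially the paper's route. You apply \cref{lem:general_diagonal_encoding} with $N=4d$, decompose each UCR by M\"ott\"onen's Gray-code construction, cancel the two coinciding CNOTs at the boundary, and obtain the angles by an FWHT followed by a Gray-code permutation. Reversing the gate order of the $R_z$ multiplexor implements the same UCR with the same angle vector, which follows by telescoping the Gray-code transitions. It is therefore an equivalent alternative to the paper's cyclic shift of the Gray code. Your root-counting argument for $\eta>0$ also fills in a point the paper leaves implicit.

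One correction to the CNOT bookkeeping: cancelling the adjacent pair already removes two CNOTs, so the count is $2\cdot 4d-2=8d-2$ directly. There is no additional ``final CNOT'' to drop. In your arrangement the composed circuit ends with an $R_z$ rotation, and deleting a CNOT that targets the ancilla would in any case change the $\bra{0}_a\cdots\ket{0}_a$ block. Remove that clause.
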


\begin{proof}
Because $\lambda_k = \frac{f_d(z_k)}{\eta}$, we are guaranteed that $|\lambda_k| \le 1$ for all $k \in [4d]$. Applying \cref{lem:general_diagonal_encoding} for a system of size $N=4d$ immediately establishes that $U_{f,4d}^{(\mathbb C)}$ is a $(1,1)$-block encoding of $\sum_{k=0}^{4d-1} \lambda_k \ketbra{k}_J$, which is equivalent to \eqref{eq:direct_complex_diag_block_app}. The unitary operators $F^{(y)}_{J,a}(\boldsymbol{\beta})$ and $F^{(z)}_{J,a}(\boldsymbol{\gamma})$ 
can each be implemented using  $4d$ single-qubit rotations and $4d$ CNOTs \cite{mottonen2004quantum} before the boundary cancellation. As detailed in \cref{app:ucr_details}, evaluating the required rotation angles from the function values takes $\mathcal{O}(d\log d)$ time, completing the proof.
\end{proof}

\subsection{The proof of the main result}
\label{sec:qsp}
In this section, we prove \cref{thm:complexQSP} and \cref{thm:chebyshevQSP}.

\begin{proof}[Proof of \cref{thm:complexQSP}]

Under the normalization condition
\eqref{eq:complexfnormalization}, we apply
\cref{lem:general_diagonal_encoding}. Together with \cref{lem: lemma 9}, we obtain the desired block encoding as follows,
\begin{equation}
\label{eq:complexUBE}
    U_{BE}^{(\mathbb C)}
    :=
    \left(
    \mathbb{I}_S\otimes P_{\mathrm{out}}^\dagger\otimes \mathbb{I}_{a}
    \right)
    \left(
    \mathcal W_U^{(S,J)}\otimes \mathbb{I}_{a}
    \right)^\dagger
    \left(
    \mathbb{I}_S\otimes U_{\mathrm{diag}}^{(\mathbb C)}
    \right)
    \left(
    \mathcal W_U^{(S,J)}\otimes \mathbb{I}_{a}
    \right)
    \left(
    \mathbb{I}_S\otimes P_{\mathrm{in}}\otimes \mathbb{I}_{a}
    \right),
\end{equation}where
\begin{align}
    P_{\mathrm{in}} 
    & = \mathrm{H}^{\otimes (m+2)}, \\
    P_{\mathrm{out}} 
    &= \text{C}_0\text{NOT}_{(m+1) \to m} (\mathrm{H} \otimes \mathbb{I} \otimes \mathrm{H}^{\otimes m}), \\
    \mathcal{W}_U^{(S,J)}
    &=
    \prod_{r=0}^{m+1}
    \left(
        \mathbb{I}_S\otimes\ket{0}\!\bra{0}_{J_r}
        +
        U^{2^r}\otimes\ket{1}\!\bra{1}_{J_r}
    \right) = \sum_{j=0}^{4d-1} U^j\otimes\ket{j}\!\bra{j}_J,\\
    U_{\mathrm{diag}}^{(\mathbb C)}
    &=
    (\operatorname{QFT}_J \otimes \mathbb{I}_a)
    U_{f_d,4d}^{(\mathbb C)}
    (\operatorname{QFT}_J^\dagger\otimes\mathbb{I}_a).
\end{align} 
Here, $\mathrm{H}$ denotes the Hadamard gate. The unitary operators $P_{\mathrm{in}}$ and $P_{\mathrm{out}}$ acting on $J$ are defined such that
$P_{\mathrm{in}}\ket{0^{m+2}}_J = \ket{+_{4d}}_J$ and $P_{\mathrm{out}}\ket{0^{m+2}}_J = \ket{+_{2d}}_J$ according to \eqref{eq:plus-states}. Additionally, $\text{C}_0\text{NOT}_{(m+1) \to m}$ denotes a zero-controlled NOT gate controlled by the $(m+1)$-th qubit and acting on the $m$-th qubit as the target. $U$ is a given unitary acting on the system register $S$ in \cref{thm:complexQSP}. 

We can construct the diagonal block encoding $U_{f_d,4d}^{(\mathbb C)}$ in $\mathcal{O}(d\log d)$ time by \cref{lem:general_diagonal_encoding}. If a target Laurent polynomial does not initially satisfy the  normalization condition \eqref{eq:complexfnormalization}, we can normalize it after evaluating the required function values $\{f_d(z_k)\}_{k=0}^{4d-1}$, and construct the corresponding diagonal block encoding as in \cref{cor:complex_diag_encoding}. Evaluating those function values can be done using the fast Fourier transform (FFT) in $\mathcal{O}(d\log d)$ time.  This completes the proof.
\end{proof}

Using the above results, we now prove \cref{thm:chebyshevQSP}. Assume $\|H\|\le 1$, and let $U_H$
be a $(1,a_H)$-block encoding of $H$; that is,
\begin{equation}
    H
    =
    \left(
        \bra{0^{a_H}}\otimes \mathbb{I}
    \right)
    U_H
    \left(
        \ket{0^{a_H}}\otimes \mathbb{I}
    \right).
\end{equation}
\cite[Theorem 2]{low2019hamiltonian} shows that one can construct a unitary $W_H$ satisfying that
for any integer $k\ge 0$,
\begin{equation}
\label{eq:walk_chebyshev_sandwich}
    \left(
        \bra{0^{a_H+1}}\otimes \mathbb{I}
    \right)
    W_H^k
    \left(
        \ket{0^{a_H+1}}\otimes \mathbb{I}
    \right)
    =
    T_k(H),
\end{equation}
where $T_k$ is the degree-$k$ Chebyshev polynomial of the first kind.

We consider a degree-$d$ polynomial in terms of its Chebyshev expansion and its corresponding Laurent polynomial
\begin{equation}
\label{eq:cheb_poly}
    p_d(x)
    =
    \sum_{k=0}^{d} a_k T_k(x),\quad f_d(z)
    :=
    a_0
    +
    \frac{1}{2}
    \sum_{k=1}^{d}
    a_k
    \left(
        z^k+z^{-k}
    \right).
\end{equation}
Since $T_k\left(\frac{z+z^{-1}}{2}\right)
    =
    \frac{z^k+z^{-k}}{2}$ for any $z\in\mathbb{C}$, we have $f_d(z)
    =
    p_d\left(
        \frac{z+z^{-1}}{2}
    \right)$.
Applying \eqref{eq:walk_chebyshev_sandwich}, we observe
\begin{align}
\left(
    \bra{0^{a_H+1}}\otimes\mathbb{I}
\right)
f_d(W_H)
\left(
    \ket{0^{a_H+1}}\otimes\mathbb{I}
\right) &=
a_0\mathbb{I}
+
\frac{1}{2}
\sum_{k=1}^{d}
a_k
\left[
    T_k(H)+T_k(H)
\right] \nonumber \\
&= p_d(H),
\label{eq:cheb_signal_block}
\end{align}since 
$W_H^{-k}=(W_H^k)^\dagger$ and
$T_k(H)^\dagger=T_k(H)$.

Now we are ready to prove \cref{thm:chebyshevQSP}.

\begin{proof}[Proof of \cref{thm:chebyshevQSP}]

Since $f_d(z)
    =
    p_d\left(
        \frac{z+z^{-1}}{2}
    \right)$, the condition \eqref{eq:eta_cheb} naturally implies \eqref{eq:complexfnormalization}.
As shown in the proof of \cref{thm:complexQSP}, one can find an explicit
$(\sqrt{2},m+3)$-block encoding
$U_{BE}^{(\mathbb C)}(W_H)$ of $f_d(W_H)$ in $\mathcal{O}(d\log d)$ time. Moreover, using \eqref{eq:cheb_signal_block}, we notice
\begin{align}
&
\left(
    \bra{0^{m+3}}
    \otimes
    \bra{0^{a_H+1}}
    \otimes\mathbb{I}
\right)
U_{BE}^{(\mathbb C)}(W_H)
\left(
    \ket{0^{m+3}}
    \otimes
    \ket{0^{a_H+1}}
    \otimes\mathbb{I}
\right)
\nonumber\\
&\qquad=
\frac{1}{\sqrt{2}}
\left(
    \bra{0^{a_H+1}}\otimes\mathbb{I}
\right)
f_d(W_H)
\left(
    \ket{0^{a_H+1}}\otimes\mathbb{I}
\right)
\nonumber\\
&\qquad=
\frac{p_d(H)}{\sqrt{2}}.
\end{align}
Since the circuit structure in this case is identical to that in \cref{thm:complexQSP}, the total circuit complexity depends on that of $W_H$. By \cite[Theorem 2]{low2019hamiltonian}, we conclude that the
$(\sqrt{2},m+4+a_H)$-block encoding of $p_d(H)$ uses at most $4d-1$ queries each to controlled-$U_H$ and controlled-$U_H^\dagger$, along with $\mathcal{O}(d a_H)$ additional single- and two-qubit gates.

If a target polynomial $p_d(x)$ does not initially satisfy the normalization condition \eqref{eq:eta_cheb}, we can normalize it by evaluating the function values $\left\{p_d\left(\cos\frac{2\pi r}{4d}\right)\right\}_{r=0}^{4d-1}$. This requires only $\mathcal{O}(d\log d)$ time using the discrete cosine transform (DCT), because the $4d$ interpolation points correspond to Chebyshev nodes. This completes the proof.
\end{proof}

\section{Discussion}

\cref{alg:qsp} summarizes the circuit constructions established in \cref{thm:complexQSP,thm:chebyshevQSP}. For demonstration, we conducted numerical simulations using standard vectorized numerical libraries in Python via Google Colaboratory. For each polynomial degree $d=2^m$, with $m=1,\ldots,24$ (corresponding to $d$ up to $2^{24} \approx 1.67\times 10^7$), we generated $10$ independent random complex vectors to simulate the construction of the diagonal block encoding $U_{f,4d}^{(\mathbb C)}$. Specifically, the required parameters are the UCR angle vectors $\boldsymbol{\beta}$ and $\boldsymbol{\gamma}$ extracted from $f_d(z_k)$. To simulate this, we drew unnormalized vectors $\widetilde{\boldsymbol{f}} \in \mathbb{C}^{4d}$ with entries $\widetilde f_k=x_k+i y_k$, where $x_k,y_k\overset{\mathrm{i.i.d.}}{\sim}\mathcal{N}(0,1)$. We then normalized these vectors using the standard $\ell_\infty$-norm, setting $\boldsymbol{f} = \frac{\widetilde{\boldsymbol{f}}}{\norm{\widetilde{\boldsymbol{f}}}_\infty}$. 

Passing these vectors into the second step of \cref{alg:qsp}, we recorded the execution times required to compute the explicit circuit parameters (i.e., the single-qubit rotation angles $\boldsymbol{\vartheta}^{(\nu)}$ in \eqref{eq:thetas}), as presented in \cref{fig:scaling}. We observe that the empirical scaling asymptotically converges to the theoretical $\mathcal{O}(d \log d)$ bound for $m \ge 9$. Furthermore, the variation in execution time across independent trials is negligible, underscoring the algorithm's highly consistent performance for massive polynomial degrees.

\begin{figure}[htbp]
    \centering
    \includegraphics[width=0.7\textwidth]{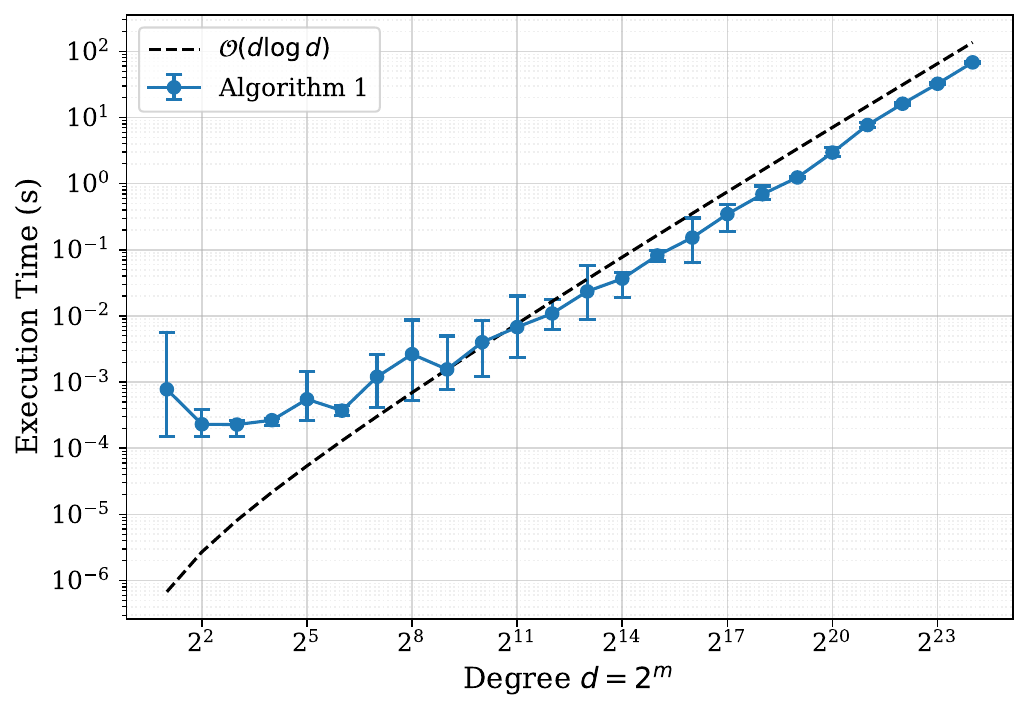}
    \caption{Execution time for computing the explicit UCR circuit parameters
in \cref{alg:qsp} as a function of the polynomial degree $d=2^m$. The solid blue line denotes the mean empirical execution time over $10$ independent random trials, with error bars indicating the absolute minimum and maximum times recorded for each degree. The dashed black line illustrates the theoretical $\mathcal{O}(d \log d)$ asymptotic scaling.}
    \label{fig:scaling}
\end{figure}

 \begin{algorithm}[h]
\SetAlgoLined
    \KwData{A target polynomial of degree $d=2^m$ (a Laurent polynomial $f_d(z)$ for Hamiltonian simulation $U=e^{iH}$, or a complex polynomial $p_d(x)$ for block encoding $U_H$).}
    \KwResult{An explicit circuit description for $U_{BE}^{(\mathbb C)}$, an exact block encoding of the normalized target polynomial.}

    1. \textbf{Normalization \& Input Conversion:} Evaluate the polynomial at $4d$ interpolation points using FFT or DCT, and scale it to satisfy the normalization condition in \eqref{eq:complexfnormalization} or \eqref{eq:eta_cheb}. If the input is $p_d(x)$, convert it via $f_d(z) = p_d\left(\frac{z+z^{-1}}{2}\right)$ and set the base unitary $U = W_H$ \eqref{eq:walk_chebyshev_sandwich}. Otherwise, set $U = e^{iH}$.

    2. \textbf{Diagonal Block Encoding (\cref{cor:complex_diag_encoding}):} From the normalized function values $f_d(z_k) = |f_d(z_k)|e^{i\theta_k}$, construct $U_{f,4d}^{(\mathbb C)} = F^{(z)}_{J,a}(\boldsymbol{\gamma}) F^{(y)}_{J,a}(\boldsymbol{\beta})$ on the index register $J$ and ancilla $a$. Form the diagonal encoding $U_{\mathrm{diag}}^{(\mathbb C)} = (\operatorname{QFT}_J\otimes \mathbb{I}_a) U_{f,4d}^{(\mathbb C)} (\operatorname{QFT}_J^\dagger\otimes \mathbb{I}_a)$.

    3. \textbf{Final Circuit Construction:} Assemble the full block encoding circuit $U_{BE}^{(\mathbb C)}$ as defined in \eqref{eq:complexUBE} using the base unitary $U$.
     
    \caption{Exact QSP Circuit Construction}
    \label{alg:qsp}
\end{algorithm}

\section*{Code Availability}
Code used for the current study is available at the following GitHub repository: 

https://github.com/TKmath/Exact-and-Efficient-Circuit-Construction-for-Block-Encoding-Matrix-Polynomials/tree/main

\section*{Data Availability}
All data that support the findings of this study are included within the Github repository in the Code Availability section.

\section*{Acknowledgements}
The author acknowledges support from a KIAS Individual Grant (CG096001) at the Korea Institute for Advanced Study. 

\appendix

\section{Review of Uniformly Controlled Rotations}
\label{app:ucr_details}

A uniformly controlled single-qubit rotation can be implemented
using only CNOT gates and single-qubit rotations
\cite{mottonen2004quantum}.
The corresponding rotation angles can be obtained efficiently using the
fast Walsh--Hadamard transform (FWHT). Related
applications can be found in
\cite{amankwah2022quantum,nakaji2022approximate}. In the following, we briefly review this procedure.

Let $J$ be an $M$-qubit control register (in our case, $M = m+2$, such that $2^M = 4d$) and let $a$ denote a single target qubit. Let $g(x)=x\oplus \lfloor x/2\rfloor$ ($x=0,\ldots,2^M-1$) denote the binary-reflected Gray-code map from \cite{mottonen2004quantum}, and let $P_{G}$ be the
corresponding permutation matrix defined by $(P_{G}\boldsymbol{v})_x=v_{g(x)}$.
For $\nu\in\{y,z\}$, a decomposition of the corresponding UCR is formulated as
\begin{equation}
\label{eq:app_general_ucr}
    F^{(\nu)}_{J,a}
    \left[
        \boldsymbol{\alpha}^{(\nu)}
    \right]
    :=
    \sum_{x=0}^{2^M-1}
    \ketbra{x}_J
    \otimes
    R_\nu\left(
        \alpha_x^{(\nu)}
    \right)_a=
    \prod_{q=0}^{2^M-1}
    \left[
        \operatorname{CNOT}_{J_{c_q}\rightarrow a}
        R_\nu\left(
            \vartheta_q^{(\nu)}
        \right)_a
    \right].
\end{equation}
Here, $\boldsymbol{\alpha}^{(\nu)}
    =
    \begin{pmatrix}
        \alpha_0^{(\nu)} &
        \alpha_1^{(\nu)} &
        \cdots &
        \alpha_{2^M-1}^{(\nu)}
    \end{pmatrix}^{\mathsf T}$, and the product in \eqref{eq:app_general_ucr} is understood in the
left-to-right circuit execution order. From the mapping $g(x)$, we identify $c_q$ as the control qubit on
which $g(q)$ and $g((q+1)\bmod 2^M)$ differ. Similarly, we define $\boldsymbol{\vartheta}^{(\nu)}=\begin{pmatrix}
        \vartheta_0^{(\nu)} &
        \vartheta_1^{(\nu)} &
        \cdots &
        \vartheta_{2^M-1}^{(\nu)}
    \end{pmatrix}^{\mathsf T}$. We define the unnormalized Walsh--Hadamard matrix $H_{2^M}$ by $(H_{2^M})_{x,y}=(-1)^{x\cdot y}$,
where $x\cdot y$ denotes the modulo-$2$ inner product of the binary
representations of $x$ and $y$. 

For $\nu \in \{y, z\}$, by identifying $\boldsymbol{\alpha}^{(y)} = \boldsymbol{\beta}$ and $\boldsymbol{\alpha}^{(z)} = \boldsymbol{\gamma}$ from \eqref{eq:general_angles}, the result in \cite{mottonen2004quantum} states that
\begin{equation}\label{eq:thetas}
    \boldsymbol{\vartheta}^{(\nu)}
    =
    \frac{1}{2^M}
    P_{G}H_{2^M}\boldsymbol{\alpha}^{(\nu)}.
\end{equation}
Using the FWHT, we compute the matrix-vector product $H_{2^M}\boldsymbol{\alpha}^{(\nu)}$ in $\mathcal{O}(2^M \log 2^M)$ time. Subsequently, applying the permutation matrix $P_{G}$ to the resulting vector yields $\boldsymbol{\vartheta}^{(\nu)}$, which requires an additional $\mathcal{O}(2^M)$ time. Since computing $\boldsymbol{\alpha}^{(\nu)}$ from \eqref{eq:general_angles} takes $\mathcal{O}(2^M)$ time, and $2^M = 4d$, the total time complexity to compute the exact explicit circuit parameters is $\mathcal{O}(d\log d)$.

Equivalently, we choose a shifted Gray-code permutation for
$F^{(z)}_{J,a}$ such that its first CNOT coincides with the final CNOT of
$F^{(y)}_{J,a}$. The adjacent identical CNOTs then cancel, as illustrated in
\cref{fig:ucr_boundary_cancellation}.  Consequently, the diagonal block encoding, 
$F^{(z)}_{J,a}(\boldsymbol{\gamma})
 F^{(y)}_{J,a}(\boldsymbol{\beta})$, in \cref{lem:general_diagonal_encoding}
requires $2^{M+1}-2$ CNOT gates rather than $2^{M+1}$, while the number of
single-qubit rotations remains $2^{M+1}$. The cyclic shift only changes the
Gray-code permutation in \eqref{eq:thetas} and does not affect the
$\mathcal{O}(d\log d)$ classical compilation complexity.

\begin{figure}[htbp]
    \centering

    \makebox[\textwidth][c]{
    \Qcircuit @C=0.75em @R=1.3em {
        & \lstick{J_{c_*}}
        & \qw
        & \qw
        & \ctrl{1}
        & \ctrl{1}
        & \qw
        & \qw
        \\
        & \lstick{a}
        & \push{\cdots}\qw
        & \gate{R_y(\vartheta^{(y)}_{2^M-1})}
        & \targ
        & \targ
        & \gate{R_z(\vartheta^{(z)}_{0})}
        & \push{\cdots}\qw
    }
    }

    \vspace{0.7em}

    \[
        \Downarrow\qquad
        \operatorname{CNOT}_{J_{c_*},a}^2=\mathbb{I}
    \]

    \vspace{0.4em}

    \makebox[\textwidth][c]{
    \Qcircuit @C=0.75em @R=1.3em {
        & \lstick{J_{c_*}}
        & \qw
        & \qw
        & \qw
        & \qw
        & \qw
        \\
        & \lstick{a}
        & \push{\cdots}\qw
        & \gate{R_y(\vartheta^{(y)}_{2^M-1})}
        & \gate{R_z(\vartheta^{(z)}_{0})}
        & \push{\cdots}\qw
        & \qw
    }
    }

    \caption{An illustration of
    the cancellation of the two CNOT gates at the boundary between
    $F^{(y)}_{J,a}(\boldsymbol{\beta})$ and
    $F^{(z)}_{J,a}(\boldsymbol{\gamma})$.
    The cyclic Gray-code starting point of the second UCR is chosen
    such that its first Gray transition flips the same control qubit
    $J_{c_*}$ as the final Gray transition of the first UCR.
    The resulting adjacent CNOT gates are identical and therefore
    cancel.
    }
    \label{fig:ucr_boundary_cancellation}
\end{figure}

\bibliographystyle{unsrt}
\bibliography{ref}

\end{document}